\DeclareMathOperator*{\argmax}{arg\,max}
\newtheorem{defn}{Definition}
\newtheorem{problem}{Problem}
\newcommand{\favg}{f_{\text{avg}}}
\newcommand{\dom}{\text{dom}}
\newcommand{\ex}[1]{\mathbb{E}\left[#1\right]}
\begin{document}
\title{Deterministic \& Adaptive Non-Submodular Maximization\\via the Primal Curvature}

\author{J. David Smith}
\affiliation{%
	\department{CISE Department}
	\institution{University of Florida}
    \city{Gainesville}
    \state{Florida}
    \postcode{32611}
}
\email{jdsmith@cise.ufl.edu}

\author{My T. Thai}
\affiliation{%
	\department{CISE Department}
	\institution{University of Florida}
    \city{Gainesville}
    \state{Florida}
    \postcode{32611}
}
\email{mythai@cise.ufl.edu}

\begin{abstract}
While greedy algorithms have long been observed to perform well on a wide variety of problems, up to now approximation ratios have only been known for their application to problems having \textit{submodular} objective functions $f$. Since many practical problems have non-submodular $f$, there is a critical need to devise new techniques to bound the performance of greedy algorithms in the case of non-submodularity.

Our primary contribution is the introduction of a novel technique for estimating the approximation ratio of the greedy algorithm for maximization of monotone non-decreasing functions based on the curvature of $f$ without relying on the submodularity constraint. We show that this technique reduces to the classical $(1 - 1/e)$ ratio for submodular functions. Furthermore, we develop an extension of this ratio to the adaptive greedy algorithm, which allows applications to non-submodular stochastic maximization problems. This notably extends support to applications modeling incomplete data with uncertainty.
\end{abstract}

\maketitle

\section{Introduction}
It is well-known that greedy approximation algorithms perform remarkably well, especially when the traditional ratio of $(1 - 1/e) \approx 0.63$ \cite{nemhauser_analysis_1978} for maximization of {\em submodular} objective functions is considered. Over the four decades since the proof of this ratio, the use of greedy approximations has become widespread due to several factors. First, many interesting problems satisfy the property of \textit{submodularity}, which states that the marginal gain of an element never increases. If this condition is satisfied, and the set of possible solutions can be phrased as a uniform matroid, then one of the highest general-purpose approximation ratios is available ``for free'' with the use of the greedy algorithm. Second, the greedy algorithm is exceptionally simple both to understand and to implement.

A concrete example of this is the \textit{Influence Maximization} problem, to which the greedy algorithm was applied with great success -- ultimately leading to an empirical demonstration that it performed near-optimally on real-world data \cite{li_why_2017}. Kempe et al. showed this problem to be submodular under a broad class of influence diffusion models known as \textit{Triggering Models} \cite{kempe_maximizing_2003}. This led to a number of techniques being developed to improve the efficiency of the sampling needed to construct the problem instance (see e.g. \cite{borgs_maximizing_2014,tang_influence_2015,nguyen_stopandstare_2016} and references therein) while maintaining a $(1 - 1/e - \epsilon)$ ratio as a result of the greedy algorithm. This line of work ultimately led to a $(1 - \epsilon)$-approximation by taking advantage the dramatic advances in sampling efficiency to construct an IP that can be solved in reasonable time \cite{li_why_2017}. In testing this method, it was found that greedy solutions performed near-optimally -- an unexpected result given the $1 - 1/e$ worst-case.

For non-submodular problems, no general approximation ratio for greedy algorithms is known. However, due to their simplicity they frequently see use as simple baselines for comparison. On the Robust Influence Maximization problem proposed by He \& Kempe, the simple greedy method was used in this manner \cite{he_robust_2016}. This problem consists of a non-submodular combination of Influence Maximization sub-problems and aims to address uncertainty in the diffusion model. Yet despite the non-submodularity of the problem, the greedy algorithm performed no worse than the bi-criteria approximation \cite{he_robust_2016}.

Another recent example of this phenomena is the socialbot reconnaissance attack studied by Li et al. \cite{li_privacy_2016}. They consider a minimization problem that seeks to answer how long a bot must operate to extract a certain level of sensitive information, and find that the objective function is (adaptive) submodular only in a scenario where users disregard network topology. In this scenario, the corresponding maximization problem, Max-Crawling, has a $1 - 1/e$ ratio due to the work of Golovin \& Krause \cite{golovin_adaptive_2011}. However, this constraint does not align with observed user behaviors. They give a model based on the work of Boshmaf et al. \cite{boshmaf_socialbot_2011}, who observed that the number of mutual friends with the bot strongly correlates with friending acceptance rate. Although this model is no longer adaptive submodular, the greedy algorithm still exhibited excellent performance. Thus we see that while submodularity is \textit{sufficient} to imply good performance, it is is not \textit{necessary} for the greedy algorithm to perform well.

This, in turn, leads us to ask: is there any tool to theoretically bound the performance of greedy maximization with non-submodularity? Unfortunately, this condition has seen little study. Wang et al. give a ratio for it in terms of the worst-case rate of change in marginal gain (the \textit{elemental curvature} $\alpha$) \cite{wang_approximation_2014}. This suffices to construct bounds for non-submodular greedy maximization, though for non-trivial problem sizes they quickly approach 0. We note, however, that the $\alpha$ ratio still encodes strong assumptions about the worst case: that the global maximum rate of change can occur an arbitrary number of times.

Motivated by the unlikeliness of this scenario, our proposed bound instead works with an estimate of how much change can occur during the $k$ steps taken by the greedy algorithm.

The remainder of this paper is arranged as follows: First, we briefly cover the preliminary material needed for the proofs and define the class of problems to which they apply (Sec. \ref{sec:prelim}). We next define the notion of curvature used and develop a proof of the ratio based on it, with an extension to adaptive greedy algorithms, and show it is equivalent to the traditional $1 - 1/e$ ratio for submodular objectives (Sec. \ref{sec:ratio-theory}), and conclude with a reflection on the contributions and a discussion of future work (Sec. \ref{sec:conclusion}). 

\textbf{Contributions.}\vspace{-.4em}
\begin{itemize}
\item A technique for estimating the approximation ratio of greedy maximization of non-submodular monotone non-decreasing objectives on uniform matroids.
\item An extension of this technique to adaptive greedy optimization, where future greedy steps depend on the success or failure of prior steps.
\end{itemize}

\vspace{-.5em}
\subsection{Background \& Related Work}\label{sec:prelim}
To understand both the state of the art and advancements of this work, we first briefly cover each constraint required by the classical $1 - 1/e$ ratio \cite{nemhauser_analysis_1978}.
\subsubsection{Constraints on the \texorpdfstring{$1 - 1/e$}{1 - 1/e} Ratio}~\\
\textit{Uniform Matroids.}
A matroid defines the notion of dependencies between elements of a set, and are denoted by $\mathcal{M} = (X, \mathcal{I})$. $\mathcal{I} \subseteq 2^X$ is the set of \textit{independent subsets} of the universe $X$.\footnote{For a complete treatment on matroids and associated theory, see Oxley \cite{oxley_matroid_1992}.} For our purposes, it will suffice to cover the semantic meaning of $k$-uniform matroids, which is codified as follows:
\begin{enumerate}
\item All subsets $S$ of a feasible solution $T$ must also be feasible solutions.
\item Every $T \subset X, |T| = k$ is a feasible solution and is maximal in the sense that no superset $T \subset R \subset X$ is feasible.
\end{enumerate}
For general matroids, there exists a $1/2$ ratio for greedy maximization of submodular functions due to Fisher et al. \cite{fisher_analysis_1978}. This is a special case of their $1 / (p + 1)$ ratio for the intersection of $p$ matroids.


\textit{Submodularity.}
The submodularity condition states that given any subsets $S \subset T$ of a universe $X$, the marginal gain of any $x \in X$ does not increase as the cardinality increases: 
\[
f(T \cup \{x\}) - f(T) \leq f(S \cup \{x\}) - f(S)
\]
This formally encodes the idea of diminishing returns. Leskovec et al. exploited this property to show a data-dependent bound in terms of the marginal gain of the top-$k$ un-selected elements \cite{leskovec_costeffective_2007}, which was  generalized to the adaptive case \cite{golovin_adaptive_2011}. 

To the best of our knowledge, the only generally applicable relaxation of this constraint is the work of Wang et al. \cite{wang_approximation_2014}, who define a ratio in terms the \textit{elemental curvature} of a function, which encodes the degree with which a function may break submodularity.

\subsubsection{Alternate Problems \& Algorithms}
The $1 - 1/e$ ratio has shown surprising generality, with proofs that it holds for maximization of sequence functions \cite{zhang_string_2016a} (and references) and adaptive stochastic maximization of functions that are submodular in expectation \cite{golovin_adaptive_2011}, among others. However, not all adjacent work relies on the same na\"{i}ve greedy method. To obtain a bound on the relaxation of monotonicity, Buchbinder et al. \cite{buchbinder_tight_2012} proposed a ``double-greedy'' algorithm with a $1/3$ (deterministic) or $1/2$ (randomized) ratio. For maximization on an intersection of $p \geq 2$ matroids, Lee et al. showed a $1 / (p + \epsilon)$, $\epsilon > 0$ ratio for a local search method \cite{lee_submodular_2010}.

Vondrak et al. proposed a continuous greedy algorithm with a $(1/c) (1 - e^{-c})$ ratio for general matroids \cite{vondrak_submodularity_2010}, where $c$ is the total curvature of the function. An augmentation of this method has been shown to obtain a $(1 - c/e)$-approximation for single matroids \cite{sviridenko_optimal_2015}, along with an analogue for supermodular minimization. We remark that, while it exhibits a better ratio, this comes with a corresponding increase in complexity of the algorithm. 

\subsubsection{Curvature-Based Ratios}
Conforti \& Cornu\'{e}jols \cite{conforti_submodular_1984} introduced the idea of \textit{total curvature} later used by Sviridenko et al. for their $(1 - c/e)$ ratio.
\begin{defn}[Total Curvature]
Given a monotone non-decreasing submodular function $f$ defined on a matroid $\mathcal{M} = (\mathcal{I}, X)$, the total curvature of $f$ is
\[
c = \max_{j \in X} \left\{ 1 - \frac{f(X) - f(X \setminus \{j\})}{f(\{j\}) - f(\emptyset)}\right\}
\]
\end{defn}

Using this definition, they arrived at a $1 / (1 + c)$ approximation for general matroids, which reduces to $\frac{1}{c}(1 - e^{-c})$ for maximzation on uniform matroids. Recently, Wang et al. \cite{wang_approximation_2014} extended this idea by introducing the \textit{elemental curvature} $\alpha$ of a function $f$:
\begin{defn}[Elemental Curvature]
  The elemental curvature of a monotone non-decreasing function $f$ is defined as
  $$ \alpha = \max_{S \subseteq X, i, j \in X} \frac{f_i(S \cup \{j\})}{f_i(S)}$$

  where $f_i(S) = f(S\cup \{i\}) - f(S)$.
\end{defn}
While the resulting ratio (Theorem \ref{thm:wang-ele-ratio}) is not as clean as that of prior work, this ratio is well-defined for non-submodular functions.

\begin{theorem}[Wang et al. \cite{wang_approximation_2014}]\label{thm:wang-ele-ratio}
For a monotone non-decreasing function $f$ defined on a $k$-uniform matroid $\mathcal{M}$, the greedy algorithm on $\mathcal{M}$ maximizing $f$ produces a solution satisfying \[\left[1 - \left(1 - A_k^{-1}\right)^k\right] f(S^*) \leq f(S) \]
where $S$ is the greedy solution, $S^*$ is the optimal solution, $A_k = \sum_{i=1}^{k-1} \alpha^i$ and $\alpha$ is the elemental curvature of $f$.
\end{theorem}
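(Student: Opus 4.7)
The plan is to follow the template of Nemhauser--Wolsey--Fisher's analysis of greedy submodular maximization, replacing the submodularity inequality at its pivotal step by an iterated application of the elemental curvature $\alpha$. Write $\emptyset = S_0 \subset S_1 \subset \cdots \subset S_k = S$ for the sequence of greedy iterates, let $\delta_{j+1} = f(S_{j+1}) - f(S_j)$ denote the marginal gain of the $(j+1)$-th step, and fix an optimal feasible $S^*$. The target is a one-step recurrence
$$ f(S^*) - f(S_{j+1}) \leq \bigl(1 - A_k^{-1}\bigr)\bigl(f(S^*) - f(S_j)\bigr), $$
which, iterated over $k$ steps from $f(S_0) = f(\emptyset) \geq 0$, collapses into $f(S^*) - f(S) \leq (1 - A_k^{-1})^k f(S^*)$ and thus the claim.

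First I would pass from $f(S^*)$ to $f(S^* \cup S_j)$ using monotonicity, so it suffices to bound $f(S^* \cup S_j) - f(S_j)$. Enumerating $S^* \setminus S_j = \{x_1, \ldots, x_m\}$ in a fixed order (with $m \leq k$) gives the telescoping identity
$$ f(S^* \cup S_j) - f(S_j) = \sum_{i=1}^m f_{x_i}\bigl(S_j \cup \{x_1, \ldots, x_{i-1}\}\bigr). $$
Now I invoke the definition of $\alpha$ inductively: each additional element in the conditioning set inflates any marginal gain by at most a factor $\alpha$, so iterating this bound $i-1$ times (holding the pivot $x_i$ fixed while sweeping in $x_1, \ldots, x_{i-1}$ one at a time) yields $f_{x_i}(S_j \cup \{x_1, \ldots, x_{i-1}\}) \leq \alpha^{i-1} f_{x_i}(S_j)$. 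Since the greedy rule at step $j+1$ maximises $f_x(S_j)$ over feasible extensions and each $x_i \in S^* \setminus S_j$ is a feasible single-element extension of $S_j$ in the $k$-uniform matroid whenever $j < k$, we have $f_{x_i}(S_j) \leq \delta_{j+1}$, and factoring yields
$$ f(S^* \cup S_j) - f(S_j) \leq \delta_{j+1} \sum_{i=1}^m \alpha^{i-1} \leq A_k \, \delta_{j+1}, $$
where the last inequality uses that the partial geometric sums are non-decreasing in $m$ along with $m \leq k$.

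Combining gives $f(S^*) - f(S_j) \leq A_k \, \delta_{j+1}$, which rearranges to the promised recurrence, and unrolling then produces the stated ratio. As a sanity check, specialising to $\alpha = 1$ (a valid upper bound for the submodular regime, where marginal gains never grow) collapses $A_k$ to a linear count in $k$ and recovers the classical $1 - (1 - 1/k)^k \to 1 - 1/e$ bound.

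The principal obstacle is the inductive promotion of the elemental curvature from its single-insertion definition into the compounded geometric bound $\alpha^{i-1}$ across a prefix of length $i-1$: one applies $\alpha$ once per auxiliary element while keeping the pivot $x_i$ fixed, and must verify that the chain composes multiplicatively without residual cross-terms. A minor bookkeeping point is the choice of enumeration of $S^* \setminus S_j$ --- any fixed order works so long as the induction on $i$ respects it, and the resulting bound is order-invariant because we have already factored out the common greedy gain $\delta_{j+1}$.
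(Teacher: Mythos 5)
The paper does not prove this theorem; it is quoted from Wang et al., and your reconstruction follows exactly the standard Nemhauser--Wolsey--Fisher template that Wang et al. themselves use (monotonicity, telescoping over $S^*\setminus S_j$, iterated application of the elemental-curvature inequality to get the $\alpha^{i-1}$ factors, the greedy choice to bound each $f_{x_i}(S_j)$ by $\delta_{j+1}$, and the unrolled recurrence). The argument is sound, including the multiplicative composition $f_{x_i}(S_j\cup\{x_1,\dots,x_{i-1}\})\leq\alpha^{i-1}f_{x_i}(S_j)$, which needs only that marginal gains are nonnegative (monotonicity) so that each single-insertion bound can be chained. One substantive mismatch: your derivation produces the factor $\sum_{i=1}^{m}\alpha^{i-1}=\sum_{i=0}^{m-1}\alpha^{i}\leq\sum_{i=0}^{k-1}\alpha^{i}$, whereas the theorem as transcribed here defines $A_k=\sum_{i=1}^{k-1}\alpha^{i}$, which omits the $\alpha^{0}=1$ term and is therefore a strictly stronger claim that your argument (and, I believe, any argument of this shape) does not deliver; this appears to be a typo in the paper's restatement, since Wang et al.'s original has the sum starting at $i=0$ and the $\alpha=1$ corollary ($A_k=k$, limit $1-1/e$) is consistent with your version. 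Two minor points worth making explicit: the final unrolling needs the normalization $f(\emptyset)\geq 0$ (you note this), and the feasibility of each $x_i$ as a single-element extension of $S_j$ is exactly where the $k$-uniform matroid structure is used.
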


\begin{corollary}[Wang et al. \cite{wang_approximation_2014}]
When $\alpha = 1$, the ratio given by Theorem \ref{thm:wang-ele-ratio} converges to $1 - 1/e$ as $k \rightarrow \infty$.
\end{corollary}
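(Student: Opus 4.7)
The plan is to substitute $\alpha = 1$ directly into the expression for $A_k$ and then evaluate the resulting limit using a standard fact about the exponential function. First I would observe that when $\alpha = 1$ the geometric sum $A_k = \sum_{i=1}^{k-1} \alpha^i$ collapses to $A_k = k-1$, so $A_k^{-1} = 1/(k-1)$ and the bound from Theorem \ref{thm:wang-ele-ratio} reduces to $1 - \left(1 - \tfrac{1}{k-1}\right)^k$. From here everything is a one-variable limit.

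Next I would factor $\left(1 - \tfrac{1}{k-1}\right)^k = \left(1 - \tfrac{1}{k-1}\right)^{k-1} \cdot \left(1 - \tfrac{1}{k-1}\right)$. The first factor is exactly the textbook sequence that converges to $1/e$, and the second factor tends to $1$, so their product tends to $1/e$. Passing to the limit in $1 - \left(1 - \tfrac{1}{k-1}\right)^k$ then yields $1 - 1/e$, which is the claimed ratio.

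The only subtlety worth flagging is the off-by-one in $A_k$: because the sum runs from $i=1$ to $k-1$, the substitution produces $k-1$ in the denominator rather than $k$, and the exponent remains $k$. This is why the convergence must be routed through the $(k-1)$-indexed version of the exponential limit together with the auxiliary factor that tends to $1$, rather than a direct application of $(1 - 1/k)^k \to 1/e$. Beyond this bookkeeping, the argument is a straightforward calculation and I do not anticipate any genuine obstacle.
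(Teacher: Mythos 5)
Your calculation is correct: with $\alpha = 1$ the sum $A_k$ collapses to $k-1$, and the factorization $\left(1 - \tfrac{1}{k-1}\right)^{k} = \left(1 - \tfrac{1}{k-1}\right)^{k-1}\left(1 - \tfrac{1}{k-1}\right)$ cleanly handles the off-by-one so that the limit is $1/e$. The paper states this corollary without proof (deferring to Wang et al.), and your argument is exactly the standard verification one would supply, so there is nothing to reconcile.
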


However, the ratios produced based on the elemental curvature rapidly converge to $0$ for non-submodular functions. This behavior is shown in Figure \ref{fig:alpha-ratios}. Even for $k = 25$, the ratio is effectively zero and therefore uninformative. In contrast, we show that our ratio produces significant bounds for two non-submodular functions, while still converging to the $1 - 1/e$ ratio for submodular functions.

\begin{figure}[thb]
\begin{subfigure}{0.2\textwidth}
\includegraphics[width=0.98\textwidth]{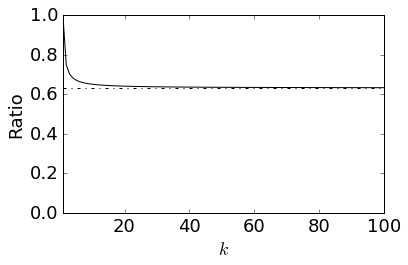}
\caption{\label{subfig:alpha-submod}$\alpha = 1.0$}
\end{subfigure}
\begin{subfigure}{0.2\textwidth}
\includegraphics[width=0.98\textwidth]{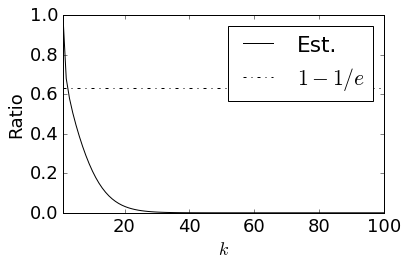}
\caption{\label{subfig:alpha-supmod}$\alpha = 1.3$}
\end{subfigure}
\caption{\label{fig:alpha-ratios} The ratio produced by Theorem \ref{thm:wang-ele-ratio} for (\subref{subfig:alpha-submod}) submodular and (\subref{subfig:alpha-supmod}) non-submodular functions.}
\vspace{-.5em}
\end{figure}

\section{A Ratio for \texorpdfstring{\MakeLowercase{$f$} }{}Non-Submodular\texorpdfstring{}{ Objectives}}\label{sec:ratio-theory}
In this section, we introduce a further extension to the notion of curvature: primal curvature. We derive a bound based on this, prove its equivalence to $1 - 1/e$ for submodular functions. Then, we extend the ratio to the adaptive case, which allows direct application to a number of problems modeled under incomplete knowledge. We adopt a problem definition similar to that of Wang et al. Specifically, our ratio applies to any problem that can be phrased as $k$-Uniform Matroid Maximization.
\begin{problem}[$k$-Uniform Matroid Maximization]
Given a $k$-uniform matroid $\mathcal{M} = (X, \mathcal{I})$ and a monotone non-decreasing function $f: 2^X \rightarrow \mathbb{R}$, find
\[
S = \argmax_{I \in \mathcal{I}} f(I)
\]
\end{problem}

\subsection{Construction of the Ratio}
As noted previously, the ratio given by elemental curvature rapidly converges to zero for non-submodular functions. We observe that this is due to the definition of $\alpha$ encoding the worst-case potential, and address this limitation by introducing the \textit{primal curvature} of a function. Our definition separates the notion of rate-of-change from the global perspective imposed by elemental curvature.\footnote{The term \textit{primal} is adopted primarily to distinguish this definition from prior work.}

\begin{defn}[Primal Curvature]
  The primal curvature of a set function $f$ is defined as
  \[ \nabla_f(i, j \mid S) = \frac{f_i(S \cup \{j\})}{f_i(S)}\]
  
  The global maximum primal curvature is equivalent to the elemental curvature of a function.
\end{defn}

This shift from global to local perspective allows focus on the patterns present in real-world problem instances rather than limiting our attention to the worst-case scenarios.

A key observation of Wang et al's work is that the elemental curvature defines an upper bound on the change between $f(S)$ and $f(T)$, for some $S \subset T$, in terms of $\alpha$ and the marginal gain at $S$. The definition of primal curvature improves on this, giving an equivalence in terms of the \textit{total primal curvature} $\Gamma$.

\begin{defn}[Total Primal Curvature]\label{defn:tpc}
    The total primal curvature of \(x \in X\) between two sets \(S \subseteq T \subset X\)
    with \(x \not\in T\) is 
    \[\Gamma (x \mid T, S) = \prod_{j=1}^r
        \nabla_f (x, t_j \mid S \cup \{t_1, t_2, \ldots, t_{j-1}\})\]
    where the \(t_j\)'s form an arbitrary ordering of
    \(T \setminus S\) and \(r = |T \setminus S|\).  
\end{defn}

We note that $\Gamma$ can be interpreted as the \textit{total change} in the marginal value of $x$ from point $A$ to point $B$.  The following lemma illustrates this, as well as providing a useful identity.

\begin{lemma}\label{lemma:tpc-ident}
	\[\Gamma(x \mid T, S) = \frac{f_x(S \cup T)}{f_x(S)}\]
\end{lemma}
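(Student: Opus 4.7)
The plan is to prove the identity by recognizing that the product in Definition \ref{defn:tpc} is telescoping. First I would unfold the definition of primal curvature inside the product. Substituting $\nabla_f(x, t_j \mid S \cup \{t_1, \ldots, t_{j-1}\}) = \frac{f_x(S \cup \{t_1, \ldots, t_{j-1}\} \cup \{t_j\})}{f_x(S \cup \{t_1, \ldots, t_{j-1}\})}$, each $j$-th factor becomes the ratio of $f_x$ evaluated on the prefix $S \cup \{t_1, \ldots, t_j\}$ over $f_x$ on the previous prefix $S \cup \{t_1, \ldots, t_{j-1}\}$.

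Next I would observe that consecutive factors share a common term in numerator and denominator, so the product collapses. The denominator of the $j$-th factor matches the numerator of the $(j-1)$-th factor, leaving only the numerator of the final ($j=r$) factor and the denominator of the first ($j=1$) factor. Since the first denominator is $f_x(S)$ and the final numerator is $f_x(S \cup \{t_1, \ldots, t_r\}) = f_x(S \cup (T \setminus S)) = f_x(S \cup T)$ (using that $\{t_1, \ldots, t_r\}$ enumerates $T \setminus S$), the identity falls out immediately.

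A small technical remark is that the definition allows an arbitrary ordering of $T \setminus S$; I would note that this is consistent because the telescoping argument gives the same value regardless of which ordering is chosen, so $\Gamma$ is well-defined. The only potential snag is whether any denominator $f_x(S \cup \{t_1, \ldots, t_{j-1}\})$ can vanish; since the lemma is used later as an equivalence, I would state the implicit assumption that the relevant marginals $f_x(\cdot)$ are nonzero on the chain of prefixes, matching the assumption already implicit in writing $\nabla_f$. No obstacle beyond this bookkeeping is expected — the proof is essentially a one-line telescoping calculation.
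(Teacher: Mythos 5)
Your proof is correct and follows exactly the paper's own argument: expand the product into its constituent ratios and let the telescoping cancellation leave $f_x(S \cup T)/f_x(S)$. The remarks on order-independence (which the paper records as Corollary \ref{tpc-ordering}) and on nonvanishing denominators are sensible bookkeeping but do not change the approach.
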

\begin{proof}
First, expand the product into its constituent terms:
\[\frac{f_x(S \cup \{t_1\})}{f_x(S)} \cdot \frac{f_x(S \cup \{t_1, t_2\})}{f_x(S \cup \{t_1\})} \cdots \frac{f_x(S \cup T)}{f_x(S \cup \{t_1, t_2, \cdots t_{r-1}\})} \]
After cancelling, the statement immediately follows.
\end{proof}

From this identity, we gain one further insight: the order in which elements are considered in $\Gamma$ does not matter.

\begin{corollary}\label{tpc-ordering}
The product $\Gamma(x \mid T, S)$ is order-independent.
\end{corollary}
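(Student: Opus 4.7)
The plan is to observe that this corollary is an essentially immediate consequence of Lemma \ref{lemma:tpc-ident}. The left-hand side of the lemma, $\Gamma(x \mid T, S)$, is defined as a product over some arbitrary ordering $t_1, \ldots, t_r$ of $T \setminus S$. The right-hand side, $f_x(S \cup T)/f_x(S)$, references only the sets $S$, $T$, and the element $x$; no ordering appears anywhere in its expression.

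My proof would therefore consist of a single short paragraph. Fix any two orderings $\sigma$ and $\tau$ of $T \setminus S$, and let $\Gamma_\sigma(x \mid T, S)$ and $\Gamma_\tau(x \mid T, S)$ denote the corresponding products given by Definition \ref{defn:tpc}. Applying Lemma \ref{lemma:tpc-ident} (whose telescoping argument does not depend on the particular ordering chosen) to each yields
\[
\Gamma_\sigma(x \mid T, S) \;=\; \frac{f_x(S \cup T)}{f_x(S)} \;=\; \Gamma_\tau(x \mid T, S),
\]
so the value of the product is independent of the ordering.

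There is no real obstacle here; the only thing worth flagging is that the proof of Lemma \ref{lemma:tpc-ident} must hold uniformly across orderings, which it does because the telescoping cancellation collapses any permutation of $T \setminus S$ to the same boundary terms $f_x(S \cup T)$ and $f_x(S)$. Hence the corollary follows without further work.
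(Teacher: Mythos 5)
Your proof is correct and matches the paper's reasoning: the corollary is drawn directly from Lemma~\ref{lemma:tpc-ident}, whose telescoping cancellation works for any ordering of $T \setminus S$ and yields the ordering-free value $f_x(S \cup T)/f_x(S)$. No further comment is needed.
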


Using this, we can prove an equivalence between the change in total benefit and the sum of marginal gains taken with respect to $S$.
\begin{lemma}\label{lemma:tpc-equiv}
For a set function $f$ and a pair of sets $S \subseteq T$,
\[f(T) - f(S) = \sum_{j=1}^r \Gamma(t_j \mid S_{j-1}, S) f_{t_j}(S)\]
where $r = |T \setminus S|$, $f_{x}(S) = f(S \cup \{x\}) - f(S)$ is the marginal gain and $S_{j-1} = S \cup \{j_1, j_2, \ldots j_{i-1}\}$.
\end{lemma}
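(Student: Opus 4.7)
The plan is to combine a standard telescoping identity with Lemma \ref{lemma:tpc-ident}, which rewrites the total primal curvature as a ratio of marginal gains.

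First, I would order $T \setminus S$ as $t_1, t_2, \ldots, t_r$ (consistent with the ordering in the definition of $S_{j-1}$) and write the elementary telescoping
\[
f(T) - f(S) = \sum_{j=1}^r \bigl[f(S_j) - f(S_{j-1})\bigr] = \sum_{j=1}^r f_{t_j}(S_{j-1}),
\]
where $S_0 = S$ and $S_j = S \cup \{t_1, \ldots, t_j\}$. This step is purely formal and requires no assumption on $f$ beyond being a set function.

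Second, I would apply Lemma \ref{lemma:tpc-ident} to each summand. Since $S \subseteq S_{j-1}$ and $t_j \notin S_{j-1}$, the lemma yields
\[
\Gamma(t_j \mid S_{j-1}, S) = \frac{f_{t_j}(S \cup S_{j-1})}{f_{t_j}(S)} = \frac{f_{t_j}(S_{j-1})}{f_{t_j}(S)},
\]
using $S \cup S_{j-1} = S_{j-1}$. Rearranging gives $f_{t_j}(S_{j-1}) = \Gamma(t_j \mid S_{j-1}, S)\, f_{t_j}(S)$, and substituting this into the telescoping sum produces exactly the claimed identity.

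The only point that requires any care is handling the degenerate case $f_{t_j}(S) = 0$, which would make the ratio defining $\Gamma$ ill-posed; this is a notational rather than mathematical obstacle, and can be resolved either by adopting the convention $\Gamma \cdot 0 = 0$ (so that the product $\Gamma(t_j \mid S_{j-1}, S)\, f_{t_j}(S)$ is interpreted as $f_{t_j}(S_{j-1})$ directly) or by restricting attention to elements with strictly positive marginal gain at $S$, which suffices for the intended use in bounding the greedy algorithm. Apart from that bookkeeping, no step presents a genuine obstacle — the lemma is essentially a repackaging of the telescoping sum through the identity established in Lemma \ref{lemma:tpc-ident}.
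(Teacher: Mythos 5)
Your proof is correct and follows essentially the same route as the paper's: telescope $f(T)-f(S)$ into marginal gains $f_{t_j}(S_{j-1})$, then convert each term via Lemma \ref{lemma:tpc-ident} using $S \cup S_{j-1} = S_{j-1}$. Your remark about the degenerate case $f_{t_j}(S)=0$ is a valid point of care that the paper's (terser) proof leaves implicit, but it does not change the argument.
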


\begin{proof}
Let $j_1$ be an arbitrary labeling of $T \setminus S$. Then we have:

\[
f(T) - f(S) = f(S \cup \{j_1, j_2, \ldots j_r\}) - f(S)
= \sum_{t=1}^r f_{j_t}(S_{t-1})\]

By the identity given in Lemma \ref{lemma:tpc-ident}, we can write
\[f(T) - f(S) = \sum_{t=1}^r \Gamma(j_t \mid S_{t-1}, S) f_{j_t}(S)\]

Noting that $S \cup S_{i} = S_{i}$. Thus, the statement is proven.
\end{proof}

With this lemma, we can now construct the ratio.

\begin{theorem}\label{thm:tpc-ratio}
For a monotone non-decreasing function $f: 2^X \rightarrow \mathbb{R}$, the greedy algorithm on a $k$-uniform matroid $\mathcal{M} = (X, \mathcal{I})$ maximizing $f$ produces a solution satisfying
\begin{equation}\label{eqn:ratio-final}
\left[1 + \left(\frac{f(S^+)}{f(S)} - 1\right)\hat\Gamma(S)\right]^{-1} f(S^*) \leq f(S) 
\end{equation}
where $S$ is the greedy solution, $S^+ = S \cup \{g_{k+1}\}$ is the greedy solution for an identical problem if a $k+1$-uniform supermatroid $\mathcal{M}^+$ of $\mathcal{M}$ is well-defined, $S^*$ is the optimal solution on $\mathcal{M}$, and $\hat\Gamma(S)$ is an estimator satisfying:
\[\forall T \in \mathcal{I}:\sum_{j_t \in T \setminus S} \Gamma(j_t \mid S_{t-1}, S) \leq \hat\Gamma(S) \]
where $S_{t-1} = S \cup \{j_1, j_2, \ldots, j_{t-1}\}$
\end{theorem}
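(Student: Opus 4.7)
The plan is to mimic the structure of the classical greedy analysis, but to collapse it into a single inequality by letting the total primal curvature $\Gamma$ absorb the element-by-element bookkeeping. Concretely, I would start from monotonicity of $f$ to write $f(S^{*}) \le f(S \cup S^{*})$, and then apply Lemma \ref{lemma:tpc-equiv} to the pair $S \subseteq S \cup S^{*}$ to obtain
\[
f(S \cup S^{*}) - f(S) \;=\; \sum_{t=1}^{r} \Gamma(j_{t} \mid S_{t-1},\, S)\, f_{j_{t}}(S),
\]
where $j_{1},\dots,j_{r}$ is any ordering of $S^{*}\setminus S$ and $S_{t-1} = S \cup \{j_{1},\dots,j_{t-1}\}$. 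This rewrites the gap between the greedy and the optimum purely in terms of marginal gains taken at $S$ and the $\Gamma$ factors that track how those marginals evolve.

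Next I would bound the marginals uniformly by the $(k{+}1)$-st greedy marginal. Because $\mathcal{M}^{+}$ is $(k{+}1)$-uniform, for every $j_{t} \in S^{*}\setminus S$ the set $S \cup \{j_{t}\}$ is independent in $\mathcal{M}^{+}$, so $j_{t}$ was a feasible candidate at the $(k{+}1)$-st greedy step on $\mathcal{M}^{+}$. The greedy rule therefore gives $f_{j_{t}}(S) \le f_{g_{k+1}}(S) = f(S^{+}) - f(S)$. Pulling this factor out of the sum and invoking the hypothesis on the estimator,
\[
\sum_{t=1}^{r} \Gamma(j_{t} \mid S_{t-1}, S) \;\le\; \hat\Gamma(S),
\]
yields $f(S^{*}) - f(S) \le \bigl(f(S^{+}) - f(S)\bigr)\,\hat\Gamma(S)$.

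From here the bound is a one-line rearrangement: divide both sides by $f(S)$ (assuming $f(S) > 0$, which follows from $f$ being monotone non-decreasing with a nontrivial optimum) to get $f(S^{*})/f(S) \le 1 + \bigl(f(S^{+})/f(S) - 1\bigr)\hat\Gamma(S)$, which inverts to equation (\ref{eqn:ratio-final}).

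The main obstacle I anticipate is the step where the greedy inequality $f_{j_{t}}(S) \le f_{g_{k+1}}(S)$ is invoked: it relies crucially on $\mathcal{M}^{+}$ being well-defined so that each $j_{t}$ is admissible at step $k{+}1$, and this is precisely the hypothesis that carries the one-step-lookahead trick. A secondary subtlety is that Lemma \ref{lemma:tpc-equiv} is applied to $S \cup S^{*}$ (not to $S^{*}$ directly), so I must first invoke monotonicity; without that, the $\Gamma$-expansion would not cover elements already in $S$. Everything else is a direct chain of substitutions using the two lemmas and the estimator assumption, so no further technical work should be required.
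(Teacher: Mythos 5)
Your proposal is correct and follows essentially the same route as the paper's own proof: monotonicity to pass to $f(S \cup S^{*})$, Lemma \ref{lemma:tpc-equiv} to expand the gap, the substitution $f_{j_t}(S) \le f_{g_{k+1}}(S)$, the estimator bound $\hat\Gamma(S)$, and the final rearrangement. The extra care you take in justifying feasibility of each $j_t$ at the $(k{+}1)$-st greedy step via $\mathcal{M}^{+}$ is a slightly more explicit version of the paper's appeal to $g_{k+1} = \argmax_x f_x(S)$, but it is the same argument.
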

\begin{proof}
To begin, note that $f(S^*) \leq f(S^* \cup S)$ due to $f$ monotone non-decreasing. Then, by Lemma \ref{lemma:tpc-equiv} we have:
\begin{equation}\label{eqn:ratio-start}f(S^* \cup S) - f(S) = \sum_{t=1}^r \Gamma(x \mid S_{t-1}, S) f_{j_t}(S)\end{equation}

We observe that any ratio that requires knowing $S^*$ is of little practical value: if $S^*$ is known, we can simply compute $f(S)/f(S^*)$. Therefore, we relax our assumptions in three key ways to go from Eqn. \eqref{eqn:ratio-start}, which assumes that we know $S^*$ exactly, to Eqn. \eqref{eqn:ratio-final}, which requires no knowledge of the optimal.

First, we partly remove the assumption on knowledge of $j_t \in S^*$ by substituting $f_{j_t}(S)$ with $f_{g_{k+1}}(S)$, where $g_{k+1} = \argmax_{x} f_x(S)$.
\[f(S^*) - f(S) \leq f(S^* \cup S) - f(S) \leq f_{g_{k+1}}(S) \sum_{t=1}^r \Gamma(j_t \mid S_{t-1}, S)\]

Next, we apply the upper bound $\hat\Gamma$ as defined above to both remove the remaining dependence on knowledge of $j_t$ and to eliminate the requirement of knowing $|S^* \setminus S|$.
\begin{equation}\label{eqn:gamma-relation}
f(S^*) - f(S) \leq f_{g_{k+1}}(S) \hat\Gamma(S)
\end{equation}
Then, rearranging terms we get
\begin{align*}
f(S^*) &\leq f(S) + f_{g_{k+1}}(S)\hat\Gamma(S)\\
&= f(S) + \left(f(S^+) - f(S)\right)\hat\Gamma(S)
\end{align*}
where $S^+ = S \cup \{g_{k+1}\}$. Then, dividing through by $f(S)$ and cross-multiplying, we get:
\begin{equation*}
\left[1 + \left(\frac{f(S^+)}{f(S)} - 1\right)\hat\Gamma(S) \right]^{-1} f(S^*) \leq f(S)
\end{equation*}
\end{proof}

When compared to traditional approximation ratios, this ratio has several obvious differences. First, it has dependencies on both the greedy solution and an extension of it to $k+1$ elements. This is both a strength and fundamental limitation of Theorem \ref{thm:tpc-ratio}: it takes into account how much the greedy solution has converged toward negligible marginal gains, but also inhibits general analysis over all potential problem instances. Further, it requires that the supermatroid $\mathcal{M}^+$ be well-defined, though we remark that this is generally not a problem. In practice, most problems solved with greedy algorithms are $k$-element solutions on $n$-element spaces, with $k$ typically much less than $n$.

\subsection{Equivalence to the \texorpdfstring{$1 - 1/e$}{1 - 1/e} Ratio}

We next show that under assumptions encoding the submodularity condition, the above is equivalent to the $1 - 1/e$ ratio as $k \rightarrow \infty$.

\begin{lemma}\label{lemma:fixed-gamma-bound}
    Given a $\hat\Gamma$ satisfying \(\forall G: \hat\Gamma \geq \hat\Gamma(G)\), the
    greedy algorithm produces a \(k\)-element solution \(S\)
    satisfying
    \[\left[1 - \left(1 - \hat\Gamma^{-1}\right)^k\right] f(S^*) \leq
        f(S) \]
\end{lemma}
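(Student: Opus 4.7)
The plan is to iterate the key inequality from Theorem~\ref{thm:tpc-ratio} across the $k$ greedy steps and then unroll the resulting linear recurrence. Let $S^{(i)}$ denote the greedy solution after $i$ steps, so $S^{(0)} = \emptyset$ and $S^{(k)} = S$. By applying Eqn.~\eqref{eqn:gamma-relation} of Theorem~\ref{thm:tpc-ratio} to the partial greedy solution $S^{(i)}$ (viewed as the output of greedy on an $i$-uniform sub-matroid), and using the hypothesis that $\hat\Gamma \geq \hat\Gamma(S^{(i)})$ for every $i$, the first step is to obtain
\[f(S^*) - f(S^{(i)}) \;\leq\; \hat\Gamma \cdot f_{g_{i+1}}(S^{(i)}) \;=\; \hat\Gamma \bigl(f(S^{(i+1)}) - f(S^{(i)})\bigr),\]
where $g_{i+1}$ is the element chosen by greedy at step $i+1$.

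Next, I would introduce the ``gap'' sequence $\delta_i = f(S^*) - f(S^{(i)})$. Rearranging the inequality above gives the recurrence
\[\delta_{i+1} \;\leq\; \bigl(1 - \hat\Gamma^{-1}\bigr)\delta_i,\]
and then a routine induction from $i = 0$ to $i = k-1$ yields $\delta_k \leq (1 - \hat\Gamma^{-1})^k \delta_0$. Under the standard normalization $f(\emptyset) = 0$ we have $\delta_0 = f(S^*)$, so rearranging $f(S^*) - f(S) \leq (1 - \hat\Gamma^{-1})^k f(S^*)$ produces exactly the stated bound.

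The main subtlety, rather than the main obstacle, is justifying that Theorem~\ref{thm:tpc-ratio} (whose statement is phrased in terms of ``the'' greedy solution $S$ together with its one-element extension $S^+$) can be applied at each intermediate step $i < k$. This is legitimate because $S^{(i)}$ is the greedy solution to the $i$-uniform matroid maximization problem on $X$ and $S^{(i+1)} = S^{(i)} \cup \{g_{i+1}\}$ plays the role of $S^+$; the supermatroid is well-defined for every $i < k$ since the original matroid is $k$-uniform. The other mild point is the normalization $f(\emptyset) = 0$: if one only assumes monotonicity, the inequality $\delta_0 \leq f(S^*)$ still holds (as $f(\emptyset) \geq 0$ is implied by monotonicity together with $f$ mapping into $\mathbb{R}_{\geq 0}$ in the problems of interest), so the final bound is unaffected.
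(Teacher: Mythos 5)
Your proposal is correct, and it starts from exactly the same place as the paper's proof: the per-step inequality $f(S^*) - f(S^{(l)}) \leq \hat\Gamma\, f_{g_{l+1}}(S^{(l)})$ obtained from Eqn.~\eqref{eqn:gamma-relation} applied to each partial greedy solution. Where you diverge is in how the $k$ inequalities are combined: the paper multiplies the $l$-th inequality by the weight $(1 - \hat\Gamma^{-1})^{k-l}$, sums over $l$, and collapses both sides via geometric-series identities, whereas you rewrite the same inequality as the one-step contraction $\delta_{l+1} \leq (1 - \hat\Gamma^{-1})\delta_l$ on the gap $\delta_l = f(S^*) - f(S^{(l)})$ and unroll it by induction. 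The two resolutions are algebraically equivalent, but yours is cleaner and makes explicit something the paper leaves implicit: the base case requires $\delta_0 = f(S^*) - f(\emptyset) \leq f(S^*)$, i.e.\ the normalization $f(\emptyset) \geq 0$ (the paper silently assumes $f(\emptyset) = 0$ when it writes $f(S_l) = \sum_{i=1}^l f_{g_i}(S_{i-1})$). Your justification for applying the theorem's inequality at intermediate steps --- that $S^{(l)}$ is the greedy output on the $l$-uniform sub-matroid and the relevant supermatroid exists for all $l < k$ --- matches the paper's implicit use of ``$S_l$, the $l$-element greedy solution,'' and the hypothesis $\forall G: \hat\Gamma \geq \hat\Gamma(G)$ covers every partial solution as you need. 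No gaps.
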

\begin{proof}
    We begin with Eqn. (\ref{eqn:gamma-relation}):
    \[
	f(S^*) - f(S_l) \leq f_{g_{l+1}}(S) \hat\Gamma(S_l)
	\]
    for each $l \leq k$, where $S_l$ denotes the $l$-element greedy solution. Substitute $\hat\Gamma$ for $\hat\Gamma(S_l)$. 
    Multiplying both sides by \((1 - \hat\Gamma^{-1})^{k - l}\) and summing
    from \(l = 1\) to \(l = k\). The left-hand side becomes:
    \[\hat\Gamma\left[1 - \left(\frac{\hat\Gamma -
                    1}{\hat\Gamma}\right)^k\right]f(S^*) = \hat\Gamma\left[1
            - \left(1 - \hat\Gamma^{-1}\right)^k\right]f(S^*)\]
    
    To obtain the right-hand side, separate \(f(S_l) = \sum_{i=1}^l
        f_{g_{i}}(S_{i-1})\) into the marginal gain terms to produce the
    following in the body of the summation:
    \[\left(\hat\Gamma(1 - \hat\Gamma^{-1})^{k-l} + \sum_{i={l+1}}^k(1 -
            \hat\Gamma^{-1})^{k-i}\right)f_{g_{l+1}}(S_l)\]
    Summing this over $l$ and employing the identity of the geometric
    series, this reduces to \(\hat\Gamma f(S_k) = \hat\Gamma f(S)\) on the right-hand side.
    Thus, we obtain the relation 
    \[\left[1 - \left(1 - \hat\Gamma^{-1}\right)^k\right] f(S^*) \leq
        f(S) \]
\end{proof}

\begin{corollary}\label{cor:old-bound}
    For a submodular monotone non-decreasing function $f$, the following
    relation holds as \(k \rightarrow \infty\):
    \[(1 - 1/e)f(S^*) \leq f(S)\]
\end{corollary}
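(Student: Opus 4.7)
The plan is to instantiate Lemma \ref{lemma:fixed-gamma-bound} with a specific choice of $\hat\Gamma$ that becomes available once we impose submodularity, and then take the limit $k \to \infty$. Specifically, I would argue that for submodular $f$ the value $\hat\Gamma = k$ is a valid uniform upper bound on $\hat\Gamma(S)$ for every greedy intermediate solution $S$.

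First I would observe that submodularity translates directly into a pointwise bound on the primal curvature: by the diminishing-returns property, $f_i(S \cup \{j\}) \leq f_i(S)$ for any $S$ and any $i, j$, so $\nabla_f(i, j \mid S) \leq 1$ whenever $f_i(S) > 0$. By Definition \ref{defn:tpc}, $\Gamma(x \mid T, S)$ is a product of such ratios, hence $\Gamma(x \mid T, S) \leq 1$ for every $x, T \supseteq S$. Next, because $\mathcal{M}$ is a $k$-uniform matroid, every feasible $T \in \mathcal{I}$ has $|T| \leq k$, so $|T \setminus S| \leq k$ for any $S$. Combining these two facts,
\[
\sum_{j_t \in T \setminus S} \Gamma(j_t \mid S_{t-1}, S) \;\leq\; |T \setminus S| \;\leq\; k,
\]
so $\hat\Gamma(S) = k$ satisfies the hypothesis of Theorem \ref{thm:tpc-ratio} uniformly in $S$, and in particular satisfies the hypothesis $\hat\Gamma \geq \hat\Gamma(G)$ of Lemma \ref{lemma:fixed-gamma-bound}.

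Plugging $\hat\Gamma = k$ into the conclusion of Lemma \ref{lemma:fixed-gamma-bound} then yields
\[
\left[1 - \left(1 - \tfrac{1}{k}\right)^k \right] f(S^*) \;\leq\; f(S),
\]
and invoking the standard limit $\lim_{k \to \infty} (1 - 1/k)^k = 1/e$ gives the desired $(1 - 1/e) f(S^*) \leq f(S)$ in the limit.

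The only real obstacle is the edge case where some marginal gain vanishes, since the ratio $\nabla_f$ in its displayed form is undefined when $f_i(S) = 0$. I would handle this in a short remark: in the derivation of Theorem \ref{thm:tpc-ratio} the quantities $\Gamma(\cdot) f_{j_t}(S)$ always appear together, so any term with $f_{j_t}(S) = 0$ can be dropped without affecting the inequality, and the convention $\Gamma = 1$ (or equivalently, the observation that adding a zero-marginal element contributes nothing) preserves the bound $\Gamma \leq 1$ used above. Beyond this bookkeeping, everything reduces to the standard submodular greedy calculation, just mediated through the primal-curvature bookkeeping that Lemma \ref{lemma:fixed-gamma-bound} has already set up.
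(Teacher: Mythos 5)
Your proof is correct and follows essentially the same route as the paper: bound the primal curvature by $1$ via submodularity, conclude $\hat\Gamma = k$ is valid, and pass to the limit through Lemma \ref{lemma:fixed-gamma-bound}. You simply make explicit the step the paper leaves implicit (each of the at most $k$ terms of the $\Gamma$-sum is at most $1$) and add a reasonable remark about vanishing marginal gains.
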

\begin{proof}
    For a submodular function, the primal curvature of any two elements
    \(u, v\) at any point \(T\) satisfies \(\nabla(u, v \mid T) \leq 1\)
    by the definition of submodularity. Thus, we obtain directly that $\hat\Gamma = k$ satisfies the requisite relation. Then, the
    limit of \((1 - \hat\Gamma^{-1})^k = (1 -
        \frac{1}{k})^k\) as \(k \rightarrow \infty\) is \(1/e\), leading directly to the statement above.
\end{proof}

Thus, we see that this ratio is a generalization of the classical $1 - 1/e$ approximation ratio that allows \textit{specialization} of a ratio to the particular kind of problem instances being operated on. Further, the definition of total primal curvature illuminates why this ratio is capable of producing more useful bounds for non-submodular objectives than that of Wang et al: the $\Gamma$ values encode a product of values that \textit{may} converge to a limit, depending on problem instance, while the $\alpha$ bound uses $\prod_{t=0}^i \alpha = \alpha^i$ which does not converge for any $\alpha > 1$ (a condition which is implied by non-submodularity).

\subsection{The Adaptive Ratio}
We conclude this section by extending this ratio to the adaptive case where the decision made at each greedy step takes into account the outcomes of previous decisions. Briefly: in an adaptive algorithm, at each step the algorithm has a \textit{partial realization} $\psi$ consistent with the true realization $\Phi$ \cite{golovin_adaptive_2011}. After each step, this partial realization is updated with the outcome of that step to form $\psi'$. The method for deciding the steps to take is termed a \textit{policy}, with the greedy algorithm encoded as the greedy policy. 

This representation supports the study of algorithms that operate with incomplete information and gradual revelation of the data. The initial motivation was described in terms of placement of sensors that may fail, and this technique has seen further use in studying networks with incomplete topology \cite{li_privacy_2016,seeman_adaptive_2013}, active learning under noise \cite{golovin_nearoptimal_2010}, and distributed representative subset mining \cite{mirzasoleiman_distributed_2013}.


We generalize our ratio to this case by defining the \textit{adaptive} primal curvature of a function in terms of the partial realizations.

\begin{defn}[Adaptive Primal Curvature]
The primal curvature of an adaptive monotone non-decreasing function $f$ is \[
\nabla_f(i, j \mid \psi) = \mathbb{E}\left[\frac{\Delta(i \mid \psi \cup s)}{\Delta(i \mid \psi)} \;\middle|\; s \in S(j)\right]
\]
where $S(j)$ is the set of possible states of $j$ and $\Delta$ is the conditional expected marginal gain \cite{golovin_adaptive_2011}.
\end{defn}

\begin{defn}[Adaptive T.P.C.]
Let $\psi \subset \psi'$ and $\psi \rightarrow \psi'$ represent the set of possible state sequences leading from $\psi$ to $\psi'$. Then the adaptive total primal curvature is\[
\Gamma(i\mid \psi', \psi) = \mathbb{E}\left[\prod_{s_j \in Q} \nabla'(i, s_j \mid \psi \cup \{s_1, \ldots, s_{j-1}\}) \;\middle|\; Q \in \psi \rightarrow \psi'\right]
\]
\end{defn}

This definition leads to the following theorem by similar arguments as Thm. \ref{thm:tpc-ratio}. However, the operations within expectation require additional care.

\begin{lemma}\label{lemma:adaptive-tpc-ident}
\[\Gamma(i \mid \psi', \psi) = \frac{\Delta(i \mid \psi')}{\Delta(i \mid \psi)}\]
\end{lemma}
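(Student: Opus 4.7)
The plan is to reduce the adaptive identity to the non-adaptive telescoping argument of Lemma \ref{lemma:tpc-ident} by first working inside the outer expectation with a fixed state sequence $Q$, and only afterwards integrating over $Q$. Conditioning on a specific $Q = (s_1, s_2, \ldots, s_r) \in \psi \rightarrow \psi'$ fixes the outcome of each $s_j$, so the inner expectation in the definition of the adaptive primal curvature collapses to the pointwise ratio
\[\nabla_f(i, s_j \mid \psi \cup \{s_1, \ldots, s_{j-1}\}) = \frac{\Delta(i \mid \psi \cup \{s_1, \ldots, s_j\})}{\Delta(i \mid \psi \cup \{s_1, \ldots, s_{j-1}\})}.\]
This is the adaptive analogue of the step that opens the proof of Lemma \ref{lemma:tpc-ident}.

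Next, I would multiply these factors together for $j = 1, \ldots, r$. As in the non-adaptive case, every numerator cancels against the denominator of the following term, leaving
\[\prod_{j=1}^{r}\nabla_f(i, s_j \mid \psi \cup \{s_1, \ldots, s_{j-1}\}) = \frac{\Delta(i \mid \psi \cup \{s_1, \ldots, s_r\})}{\Delta(i \mid \psi)} = \frac{\Delta(i \mid \psi')}{\Delta(i \mid \psi)},\]
where the last equality uses the fact that every $Q \in \psi \rightarrow \psi'$ is by definition a state sequence whose union with $\psi$ is exactly $\psi'$.

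Finally, I take the outer expectation over $Q$. Since the telescoped value above does not depend on which particular sequence $Q$ was chosen — it depends only on the endpoints $\psi$ and $\psi'$ — the conditional expectation is trivial and we obtain
\[\Gamma(i \mid \psi', \psi) = \mathbb{E}\!\left[\frac{\Delta(i \mid \psi')}{\Delta(i \mid \psi)} \;\middle|\; Q \in \psi \rightarrow \psi'\right] = \frac{\Delta(i \mid \psi')}{\Delta(i \mid \psi)}.\]

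The main obstacle I anticipate is being careful about the interpretation of the inner $\nabla_f$ factors: in the definition of $\Gamma$ the ``state'' argument is already a realized $s_j$ drawn from $Q$, so the expectation in the definition of $\nabla_f$ must be read as conditioned on that state. Arguing this cleanly (via the tower property of conditional expectation, or by first conditioning on $Q$ and then averaging) is what turns the otherwise-purely-algebraic telescoping into a legitimate statement about random realizations; once that is made precise, the identity itself is immediate.
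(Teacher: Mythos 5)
Your proposal is correct and follows essentially the same route as the paper: fix a sequence $Q$, telescope the product of $\Delta$-ratios, and observe that the resulting value $\Delta(i \mid \psi')/\Delta(i \mid \psi)$ is independent of $Q$, so the outer expectation is trivial. Your extra care about how the inner expectation in $\nabla_f$ collapses once a realized state $s_j$ is fixed is a worthwhile clarification of a step the paper passes over silently, but it does not change the argument.
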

\begin{proof}
Fix a sequence $Q \in \psi \rightarrow \psi'$ of length $r$. Then, expanding the product we obtain
\[
\frac{\Delta(i \mid \psi \cup \{s_1\})}{\Delta(i \mid \psi)} \cdot \frac{\Delta(i \mid \psi \cup \{s_1, s_2\})}{\Delta(i \mid \psi \cup \{s_1\})} \cdots \frac{\Delta(i \mid \psi')}{\Delta(i \mid \psi' \setminus \{s_{r-1}\})}
\]
If we take the expectation of this w.r.t. the possible sequences $Q$, we obtain the same ratio regardless of $Q$, and therefore the claim holds trivially.
\end{proof}

\begin{corollary}\label{cor:adaptive-gamma-bound}
Suppose that $\forall \psi' \supset \psi, i \not\in \dom(\psi'): \Gamma(i \mid \psi', \psi) \leq \hat\Gamma(\psi)$. Then \[
\Delta(i \mid \psi') \leq \hat\Gamma(\psi) \Delta(g_{l+1}\mid \psi)
\]
where $\psi$ is the partial realization resulting from application of the $l$-element greedy policy, $\psi \subset \psi'$, $i \not\in \dom(\psi')$, and $g_{l+1}$ is the next element that would be selected by the greedy policy.
\end{corollary}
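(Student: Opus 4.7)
The plan is to deduce the stated inequality in two short steps: first use Lemma~\ref{lemma:adaptive-tpc-ident} together with the hypothesis to control $\Delta(i \mid \psi')$ in terms of $\Delta(i \mid \psi)$, and then invoke the defining property of the greedy policy to replace $\Delta(i \mid \psi)$ by $\Delta(g_{l+1} \mid \psi)$.

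Concretely, I would begin by observing that Lemma~\ref{lemma:adaptive-tpc-ident} gives
\[\Gamma(i \mid \psi', \psi) = \frac{\Delta(i \mid \psi')}{\Delta(i \mid \psi)},\]
so the hypothesis $\Gamma(i \mid \psi', \psi) \leq \hat\Gamma(\psi)$ rearranges into
\[\Delta(i \mid \psi') \leq \hat\Gamma(\psi)\,\Delta(i \mid \psi).\]
At this point essentially all of the real work has already been done by the lemma: the effect of expanding the partial realization from $\psi$ out to $\psi'$ has been absorbed entirely into the single scalar factor $\hat\Gamma(\psi)$.

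Next I would argue that since $\psi \subset \psi'$ and $i \not\in \dom(\psi')$, we in particular have $i \not\in \dom(\psi)$, so $i$ is a feasible candidate at the $(l+1)$-th greedy step taken from $\psi$. By the defining property of the adaptive greedy policy, $g_{l+1} = \argmax_{x \not\in \dom(\psi)} \Delta(x \mid \psi)$, which yields $\Delta(i \mid \psi) \leq \Delta(g_{l+1} \mid \psi)$. Chaining this with the previous display gives $\Delta(i \mid \psi') \leq \hat\Gamma(\psi)\,\Delta(g_{l+1} \mid \psi)$, as claimed.

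I do not anticipate any substantial obstacle. The only subtlety worth flagging is the interplay between the expectation in the definition of $\Gamma$ and the deterministic-looking ratio it reduces to, but Lemma~\ref{lemma:adaptive-tpc-ident} has already discharged this by noting that the telescoping product is independent of the sequence $Q \in \psi \rightarrow \psi'$, so its expectation collapses. What remains is purely algebraic, with the only careful bookkeeping being to spell out that $i \not\in \dom(\psi)$ follows from the stated hypotheses so that the greedy maximality inequality is applicable.
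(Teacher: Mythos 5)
Your proposal is correct and follows essentially the same route as the paper: apply Lemma~\ref{lemma:adaptive-tpc-ident} to write $\Delta(i\mid\psi')=\Gamma(i\mid\psi',\psi)\,\Delta(i\mid\psi)$, bound $\Gamma$ by $\hat\Gamma(\psi)$, and use greedy maximality to replace $\Delta(i\mid\psi)$ with $\Delta(g_{l+1}\mid\psi)$. The only difference is that you make explicit the greedy-maximality step (and the feasibility of $i$ at step $l+1$) that the paper's one-line proof leaves implicit.
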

\begin{proof}
By Lemma \ref{lemma:adaptive-tpc-ident}, \[
\Delta(i \mid \psi') = \Gamma(i \mid \psi', \psi) \Delta(i \mid \psi) \leq \hat\Gamma(\psi) \Delta(g_{l+1} \mid \psi)
\]
and thus the statement holds.
\end{proof}

\begin{lemma}\label{lemma:adaptive-opt-relation}
\begin{equation}
\favg(\pi') - \favg(\pi_l) \leq k\hat\Gamma(\pi_l)\Delta_{avg}(\pi_l, \pi_{l+1})
\end{equation}
where $\pi_l$ is the $l$-truncation of $\pi$ with $l < k$, $\pi'$ selects exactly $k$ elements, $\hat\Gamma(\pi_l) = \max_{\psi = \pi_l(\mathbf{\Phi})} \hat\Gamma(\psi)$ is the maximum over all possible realizations resulting from applying policy $\pi_l$, and $\Delta_{avg}(\pi_l, \pi_{l+1}) = \favg(\pi_{l+1}) - \favg(\pi_l)$.
\end{lemma}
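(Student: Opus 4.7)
The plan is to mirror the proof of Theorem~\ref{thm:tpc-ratio}, substituting Corollary~\ref{cor:adaptive-gamma-bound} for the non-adaptive total-primal-curvature bound used there. The key device is the concatenated policy $\pi_l @ \pi'$ (run $\pi_l$ first, then execute $\pi'$): by adaptive monotonicity, $\favg(\pi') \leq \favg(\pi_l @ \pi')$, so it suffices to bound the excess $\favg(\pi_l @ \pi') - \favg(\pi_l)$ and then absorb it into $k\hat\Gamma(\pi_l)\Delta_{avg}(\pi_l,\pi_{l+1})$.

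Concretely, I would fix a realization $\Phi$, set $\psi = \pi_l(\Phi)$, and let $e_1,\ldots,e_r$ with $r \leq k$ be the elements $\pi'$ selects after $\psi$, producing the chain $\psi_0 = \psi \subset \psi_1 \subset \cdots \subset \psi_r$. For this fixed $\Phi$ the excess telescopes to $\sum_{i=1}^{r}\Delta(e_i \mid \psi_{i-1})$. Applying Corollary~\ref{cor:adaptive-gamma-bound} with $\psi' = \psi_{i-1}$ yields $\Delta(e_i \mid \psi_{i-1}) \leq \hat\Gamma(\psi)\Delta(g_{l+1}\mid\psi)$ for every $i$, so summing the at most $k$ terms gives $\favg(\pi_l @ \pi')(\Phi) - \favg(\pi_l)(\Phi) \leq k\,\hat\Gamma(\psi)\Delta(g_{l+1}\mid\psi)$.

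Taking expectations over $\Phi$, the left-hand side becomes $\favg(\pi') - \favg(\pi_l)$ (after invoking adaptive monotonicity). On the right-hand side, the pointwise inequality $\hat\Gamma(\psi) \leq \hat\Gamma(\pi_l) = \max_{\psi = \pi_l(\Phi)}\hat\Gamma(\psi)$ lets me pull $\hat\Gamma(\pi_l)$ out of the expectation, leaving $k\,\hat\Gamma(\pi_l)\,\ex{\Delta(g_{l+1}\mid\psi)}$. Since $g_{l+1}$ is precisely the element selected by the $(l{+}1)$-st greedy step, this expectation equals $\favg(\pi_{l+1})-\favg(\pi_l) = \Delta_{avg}(\pi_l,\pi_{l+1})$ by definition, completing the chain of inequalities.

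The main obstacle I anticipate is the telescoping step: unlike in Theorem~\ref{thm:tpc-ratio}, the intermediate partial realizations $\psi_{i-1}$ depend on the random outcomes produced along $\pi'$'s trajectory, so I must check that Corollary~\ref{cor:adaptive-gamma-bound} applies uniformly at every $\psi_{i-1}$ (not just the root $\psi$) and that the bound $\hat\Gamma(\psi_{i-1}) \leq \hat\Gamma(\pi_l)$ still holds for these deeper realizations rather than only those reachable under $\pi_l$ itself. Assuming the definition of $\hat\Gamma(\pi_l)$ is read as a supremum over all partial realizations extending any $\pi_l(\Phi)$ (which is what is needed here), the rest of the argument is a straightforward adaptive port of the non-adaptive proof.
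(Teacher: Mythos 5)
Your proposal is correct and takes essentially the same route as the paper: the paper compresses your concatenation-and-telescoping argument into a single appeal to Corollary~\ref{cor:adaptive-gamma-bound}, then pulls $\hat\Gamma(\pi_l)$ out of the expectation and identifies the remaining term with $\Delta_{avg}(\pi_l, \pi_{l+1})$ exactly as you do. Your closing worry is moot: Corollary~\ref{cor:adaptive-gamma-bound} already quantifies over all extensions $\psi' \supset \psi$ in its hypothesis, and its right-hand side only ever involves $\hat\Gamma$ at the root realization $\psi = \pi_l(\Phi)$, which is precisely what $\hat\Gamma(\pi_l)$ maximizes over.
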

\begin{proof}
By Corollary \ref{cor:adaptive-gamma-bound}, we have \begin{align*}
&\favg(\pi') - \favg(\pi_l) \leq \ex{k\hat\Gamma(\psi)\Delta(g_{l+1}\mid \psi)\mid \psi}\\
&= k\hat\Gamma(\pi_l)\ex{\Delta(g_{l+1} \mid \psi) \mid \psi}\\
&= k\hat\Gamma(\pi_l)\ex{\ex{f(\dom(\psi) + g_{l+1}, \Phi) - f(\dom(\psi), \Phi) \mid \Phi \sim \psi}\mid \psi}\\
&= k\hat\Gamma(\pi_l)\ex{f(E(\pi_{l+1}, \Phi), \Phi) - f(E(\pi_l, \Phi), \Phi) \mid \Phi}\\
&= k\hat\Gamma(\pi_l)\Delta_{avg}(\pi_l, \pi_{l+1})
\end{align*}
where the first equality uses the definition $\hat\Gamma(\psi) \leq \hat\Gamma(\pi_l)$ and the second uses the definition of $\Delta(\cdot)$.
\end{proof}

\begin{theorem}
Define $\hat\Gamma_k(\pi) = \max_{0 \leq l \leq k} \hat\Gamma(\pi_l)$. Then \begin{equation}
\left[1 - \left(1 - \frac{1}{k\hat\Gamma_k(\pi)}\right)^k\right]\favg(\pi^*_k) \leq \favg(\pi_k)
\end{equation}
\end{theorem}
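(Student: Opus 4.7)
My plan is to recast the per-step bound from Lemma \ref{lemma:adaptive-opt-relation} as a one-step geometric contraction on the gap to the optimum, then iterate, mirroring the telescoping argument of Lemma \ref{lemma:fixed-gamma-bound}. First I would set $\pi' = \pi^*_k$ in Lemma \ref{lemma:adaptive-opt-relation}, obtaining for every $0 \leq l < k$ the inequality
\[
\favg(\pi^*_k) - \favg(\pi_l) \leq k \hat\Gamma(\pi_l) \Delta_{avg}(\pi_l, \pi_{l+1}).
\]
Since $\hat\Gamma_k(\pi) \geq \hat\Gamma(\pi_l)$ by definition, I would relax the right-hand side by replacing $\hat\Gamma(\pi_l)$ with the $l$-independent quantity $\hat\Gamma_k(\pi)$. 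Writing $\beta := k\hat\Gamma_k(\pi)$, this is the adaptive analogue of Eqn. (\ref{eqn:gamma-relation}), valid uniformly in $l$.

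Next I would introduce the gap $a_l = \favg(\pi^*_k) - \favg(\pi_l)$ and note that by the definition of $\Delta_{avg}$ we have $\Delta_{avg}(\pi_l, \pi_{l+1}) = \favg(\pi_{l+1}) - \favg(\pi_l) = a_l - a_{l+1}$. Substituting into the per-step bound turns it into $a_l \leq \beta(a_l - a_{l+1})$, which rearranges to the contraction
\[
a_{l+1} \leq \left(1 - \frac{1}{\beta}\right) a_l.
\]
Iterating from $l = 0$ to $l = k-1$, and using that the empty policy yields $\favg(\pi_0) = 0$ so that $a_0 = \favg(\pi^*_k)$, produces $a_k \leq (1 - 1/\beta)^k \favg(\pi^*_k)$. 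Substituting $a_k = \favg(\pi^*_k) - \favg(\pi_k)$ and rearranging then yields the claimed inequality.

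The main obstacle I expect is bookkeeping rather than a conceptually hard step. The two things that will need careful justification are (i) that Lemma \ref{lemma:adaptive-opt-relation} genuinely applies with $\pi' = \pi^*_k$, which should be immediate since that lemma is stated for an arbitrary $\pi'$ selecting exactly $k$ elements; and (ii) that $\favg(\pi_0) = 0$, which holds by the convention that the null policy observes no states and thus accrues no expected value, ensuring the iterated contraction initializes at the correct value. Once the uniform per-step inequality is in hand, the geometric iteration is essentially identical to the non-adaptive argument of Lemma \ref{lemma:fixed-gamma-bound}, with the single change that $\hat\Gamma$ is replaced by $k\hat\Gamma_k(\pi)$ — which is precisely the source of the extra factor of $k$ inside the parenthetical in the statement.
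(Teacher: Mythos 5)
Your proposal is correct and follows essentially the same route as the paper: both start from Lemma \ref{lemma:adaptive-opt-relation} with $\pi' = \pi^*_k$, uniformly relax $\hat\Gamma(\pi_l)$ to $\hat\Gamma_k(\pi)$, and extract the geometric factor $(1 - (k\hat\Gamma_k(\pi))^{-1})^k$; the paper packages this as a weighted telescoping sum (multiplying by $(1 - (k\hat\Gamma_k(\pi))^{-1})^{k-1-l}$ and summing) whereas you iterate the equivalent one-step gap contraction $a_{l+1} \leq (1 - 1/\beta)a_l$, which is a purely presentational difference. Your explicit reliance on $\favg(\pi_0) = 0$ is also implicitly present in the paper's telescoping, so nothing is missing.
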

\begin{proof}
By Lemma \ref{lemma:adaptive-opt-relation}, we have \[ 
\favg(\pi^*_k) \leq \favg(\pi_l) + k\hat\Gamma(\pi_l)\Delta_{avg}(\pi_l, \pi_{l+1})
\]
Multiply both sides by $(1 - (k\hat\Gamma_k(\pi))^{-1})^{k-1-l}$ and sum from $l=0$ to $k-1$. We get that the left hand side reduces to \[
k\hat\Gamma_k(\pi)\left[1 - \left(\frac{k\hat\Gamma_k(\pi) - 1}{k\hat\Gamma_k(\pi)}\right)^k\right]\favg(\pi^*_k)
\]
and the right hand side reduces to $k\hat\Gamma_k(\pi)\favg(\pi_k)$ by employing the identity for partial sums of a geometric series to find that each term of the outer sum has coefficient $k\hat\Gamma_k(\pi)$. Combining these, we directly obtain the statement of the theorem.
\end{proof}
\vspace{-.3em}\section{Conclusion \& Future Work}\label{sec:conclusion}
In this paper, we presented a method for estimating the approximation ratio of greedy maximization that works transparently for both submodular and non-submodular functions, in addition to a variant supporting adaptive greedy algorithms. This ratio reduces to at worst $1 - 1/e$ as $k \rightarrow \infty$ for submodular functions, and is shown to provide performance bounds for non-submodular maximization.

While we have demonstrated the utility of our technique for understanding the performance of non-submodular maximization, there remains room for further development. Relaxations of the uniformity and monotonicity conditions have found widespread use for submodular functions, and we expect that relaxing them for this ratio would likewise be generally useful.
\vspace{-0.5em}
\bibliographystyle{ACM-Reference-Format}
\bibliography{kdd17}
\end{document}